\crefname{algorithm}{Protocol}{Protocols}
\newcommand{\Unit}{\mathds{1}}
\newcommand{\dbra}[1]{\langle\bra{#1}}
\newcommand{\dket}[1]{\ket{#1}\rangle}
\newcommand{\dbraket}[2]{\langle\bra{#1}\ket{#2}\rangle}
\newtheorem{theorem}{Theorem}
\crefname{enumi}{Step}{Steps}
\renewcommand{\ALG@name}{Protocol}
\begin{document}

\title{Multiqubit noise deconvolution and characterization}
\author{Simone Roncallo\,\orcidlink{0000-0003-3506-9027}}
	\email[Simone Roncallo: ]{simone.roncallo01@ateneopv.it}
	\affiliation{Dipartimento di Fisica, Università degli Studi di Pavia, Via Agostino Bassi 6, I-27100, Pavia, Italy}
	\affiliation{INFN Sezione di Pavia, Via Agostino Bassi 6, I-27100, Pavia, Italy}
	
\author{Lorenzo Maccone\,\orcidlink{0000-0002-6729-5312}}
	\email[Lorenzo Maccone: ]{lorenzo.maccone@unipv.it}
	\affiliation{Dipartimento di Fisica, Università degli Studi di Pavia, Via Agostino Bassi 6, I-27100, Pavia, Italy}
	\affiliation{INFN Sezione di Pavia, Via Agostino Bassi 6, I-27100, Pavia, Italy}
	
\author{Chiara Macchiavello\,\orcidlink{0000-0002-2955-8759}}
	\email[Chiara Macchiavello: ]{chiara.macchiavello@unipv.it}
	\affiliation{Dipartimento di Fisica, Università degli Studi di Pavia, Via Agostino Bassi 6, I-27100, Pavia, Italy}
	\affiliation{INFN Sezione di Pavia, Via Agostino Bassi 6, I-27100, Pavia, Italy}

\begin{abstract}
    We present a noise deconvolution technique for obtaining noiseless expectation values of noisy observables at the output of multiqubit quantum channels. For any number of qubits or in the presence of correlations, our protocol applies to any mathematically invertible noise model, even when its inverse map is not physically implementable, i.e. when it is neither completely-positive nor trace-preserving. For a generic observable affected by Pauli noise it provides a quadratic speedup, always producing a rescaling of its Pauli basis components. We show that it is still possible to achieve the deconvolution while experimentally estimating the noise parameters, whenever these are unknown (bypassing resource-heavy techniques such as quantum process tomography). We provide a simulation, with examples for both Pauli and non-Pauli channels.
\end{abstract}
\keywords{Noise deconvolution; Noise mitigation; Multiqubit quantum channel; Multiqubit correlated noise; Pauli channel; Pauli transfer matrix; Noise characterization;}
\maketitle

\section{Introduction}
Noise in quantum systems can affect the measurement outcome of any observable, modifying the results of measurement-based protocols or procedures, such as state and process tomography \citep{book:Nielsen,art:QUIT_Tomography,art:Mohseni,art:Mataloni} or quantum simulator experiments \citep{art:Georgescu_Simulation}. This has also important consequences in quantum computation: noise sensitivity remains one of the main drawbacks that prevents quantum computers from outperforming their classical counterparts. Several noise mitigation techniques have been considered in the literature, aimed to reduce errors and potential loss of data in the computation process \citep{art:Temme, art:Endo, art:Endo_Review}. Recently, a noise deconvolution technique was illustrated for observables of single-qubit systems \citep{art:QUIT_NoiseDeconvolution}, by means of a tomographic reconstruction formula that acts like a postprocessing operation on the noisy data, without introducing  modifications to the system.

In this paper we discuss a deconvolution technique that applies to any multiqubit (possibly correlated) noise model \citep{art:QUIT_CorrelatedNoise,art:QUIT_Pauli,%
art:QUIT_QuantumCapacity,art:Daems,art:Huang}, provided its inverse map exists, even if not physically implementable \citep{art:Jiang_InvertibleMaps}. We modify the point of view of \citep{art:QUIT_NoiseDeconvolution}, adopting an operational framework that is more suitable for multiqubit implementations. Unlike other approaches in the literature \citep{art:Temme, art:Endo}, our protocol is passively implemented at the data processing stage: it does not require further experimental (or circuital) configurations, nor active modifications of the original system. For this reason, its range of applicability is not limited to quantum computing: it applies to generic quantum measurements on noisy states, e.g. it can be used to reverse open quantum dynamics \citep{art:Lautenbacher, art:Lloyd_PetzRecovery}.
\begin{figure}[b]
	\centering
	\includegraphics[width = 0.475\textwidth]{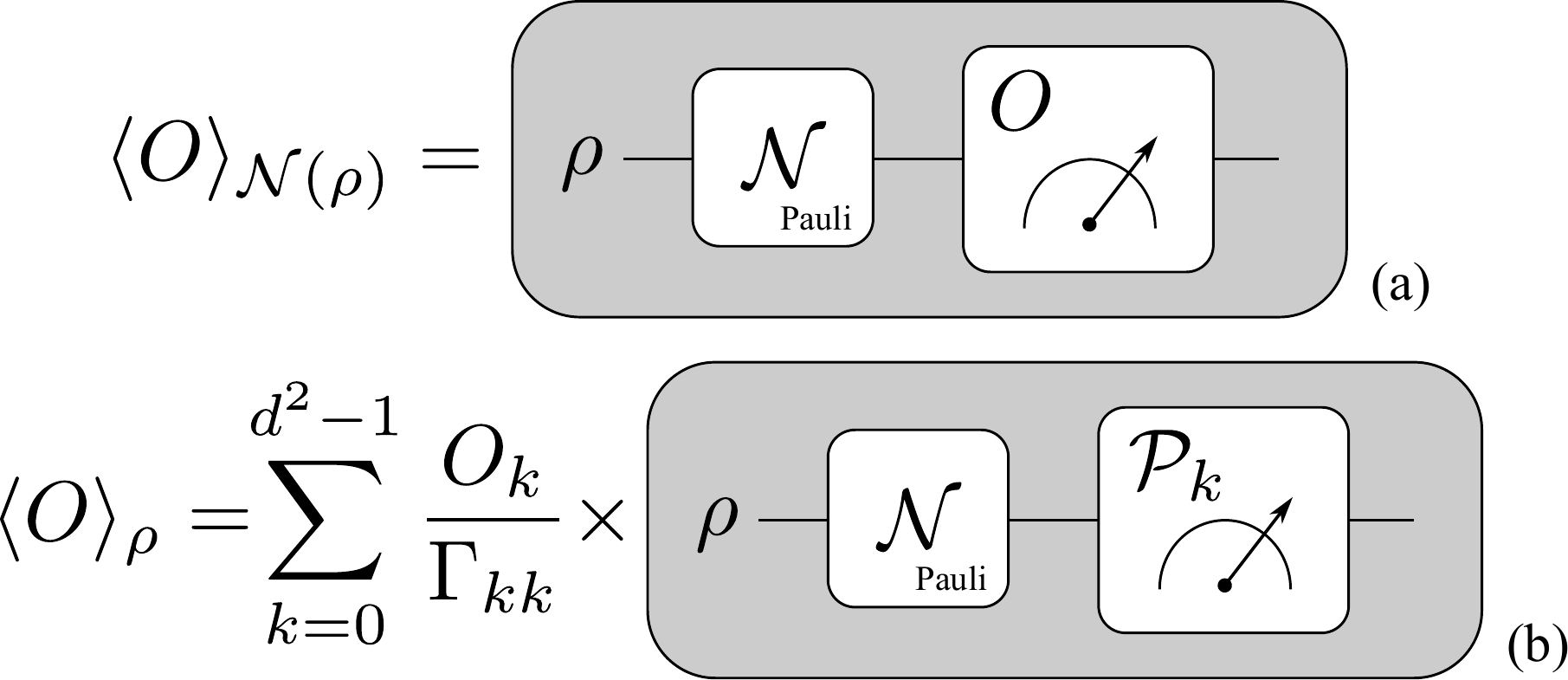}
    \caption{\label{fig:setup}(a) Expectation value of a multiqubit observable $O$, affected by Pauli noise $\mathcal{N}$. (b) The same scheme is applied to the non-zero components of $O$. Deconvolution is achieved by rescaling the noisy data with the corresponding diagonal entry of the Pauli transfer matrix $\Gamma_{\mathcal{N}}$.}
\end{figure}

Our method works in general, e.g. it does not require Markovian correlations. In the specific but important case of Pauli channels \citep{art:Flammia}, it provides a quadratic speedup over other reconstructions. Given an observable $O$, we show that its noiseless expectation value can be tomographically reconstructed by performing local measurements on those elements of the Pauli basis for which $O$ takes non-zero components and by rescaling them in terms of the corresponding entries of the Pauli transfer matrix (PTM) of the channel \citep{art:Greenbaum,art:Nielsen_GateTomography,art:Wood_ChannelRepresentations}. This procedure works for any number of qubits and does not require the complete inversion of the noise map, nor the complete calculation of its PTM. For $n$-qubit Pauli channels the computational complexity of our procedure, i.e. the number of factors required to complete the deconvolution, scales with the number $r = 1,2,..., d^2$ of non-zero components of $O$ in the Pauli basis, with $d = 2^n$ the Hilbert space dimension ($r = 1$ represents an observable made by exactly one basis element, while $r = d^2$ represents the worst cases in which $O$ has all non-zero components). This scenario is summarized in \cref{fig:setup}. Then, we discuss a characterization of the noise map that provides the necessary PTM entries in term of a few measurements on the Pauli basis, without running a full process tomography of the system. This represents an efficient alternative whenever the theoretical computation of the PTM is not doable, e.g. when the noise parameters are unknown. Finally, we show that our protocols apply also to non-Pauli channels, with reduced efficiency. In this case, we provide the deconvolution from the complete inversion of the PTM, and we show that it scales as $d^4$ in the worst case. So, in essence, our procedure experiences a quadratic gain ($d^2$ vs $d^4$) for Pauli channels.

In \cref{sec:SEC1} we review the vectorization of operators and channels. In \cref{sec:SEC2} we present noise deconvolution for multiqubit Pauli channels, addressing their characterization whenever the noise parameters are unknown. Then, we generalize our discussion to the non-Pauli case. In \cref{sec:SEC3} we consider the explicit deconvolution of $n$-qubit bit-flip and depolarizing correlated noises, simulating the latter for $n=3$. As a non-Pauli example, we consider a two-qubit amplitude damping correlated channel, which models qubits losses inside devices.

\section{Vectorization\label{sec:SEC1}}
We consider the Hilbert space of an $n$-qubit system. The basis for the set of operators is
\begin{equation}
	\big\{\sigma_{\alpha_1} \otimes \sigma_{\alpha_2} \otimes ... \otimes \sigma_{\alpha_n} \ | \ \alpha_1, \alpha_2, ..., \alpha_n = 0,1,2,3 \big\} \ ,
	\label{eq:n-basis}
\end{equation}
with $\sigma_0 = \Unit_2$, $\sigma_1 = \sigma_x$, $\sigma_2 = \sigma_y$ and $\sigma_3 = \sigma_z$. We write the Pauli basis in the notation
\begin{equation}
	\big\{ \mathcal{P}_k \ | \ k = 0,1,2,3,...,d^2-1 \big\} \ ,
	\label{eq:pauli-basis}
\end{equation}
with $d = 2^n$ and $\mathcal{P}_k$ denoting the generic element of \cref{eq:n-basis} in lexicographic order. We introduce the vectorized representation \citep{art:Greenbaum}, in which each element of the basis $\mathcal{P}_k$ is mapped to a vector $\ket{k}\rangle$. In this space, an operator $A$ is represented as a $1 \times d^2$ column vector 
\begin{equation}
	\dket{A} = \sum_{k = 0}^{d^2-1} A_k \dket{k} \ ,
\end{equation}
with $A_k = \dbraket{k}{A}$ given by the Hilbert-Schmidt inner product
\begin{equation}
	\dbraket{A}{B} := \frac{1}{d}\Tr[A^{\dagger}B] \ ,
	\label{eq:inner_product}
\end{equation}
with $B$ any operator on this Hilbert space.

Consider a quantum system in the state $\rho$. A quantum channel is a linear completely-positive and trace-preserving (CPTP) map that modifies the system state as $\rho \to \Phi(\rho)$. In the vectorized framework, a channel $\Phi$ is represented by a $d^2 \times d^2$ matrix
\begin{equation}
	\Gamma_{\Phi} = \sum_{j,q = 0}^{d^2-1}\Gamma_{jq}\dket{j}\dbra{q} \ ,
\end{equation}
with components given by the Hilbert-Schmidt inner product
\begin{equation}
	\Gamma_{jq} =  \dbra{j}\Gamma_{\Phi}\dket{q} = \frac{1}{d}\Tr[\mathcal{P}_j \Phi(\mathcal{P}_q)] \ .
	\label{eq:PTMDefinitionVect}
\end{equation}
This is called the Pauli transfer matrix of the channel \citep{art:Greenbaum}. In this representation, the action of the channel $\Phi(A)$ is given by a matrix-vector multiplication
\begin{equation}
	\dket{\Phi(A)} = \Gamma_{\Phi}\dket{A} = \sum_{j,q = 0}^{d^2-1} \Gamma_{jq} A_{q} \dket{j} \ .
\end{equation}
The CPTP condition guarantees that $\Gamma_{0q} = \delta_{0q}$, with $\delta_{jq}$ the Kronecker delta. For unital channels,  i.e. when $\Phi(\Unit) = \Unit$, it holds also that $\Gamma_{j0} = \delta_{j0}$.

Consider a channel $\Phi$, its adjoint $\Phi^*$ is the map satisfying
\begin{equation}
	\dbraket{A}{\Phi^*(B)} = \dbraket{\Phi(A)}{B} \ .
\end{equation}
This implies that the adjoint PTM, here denoted by $\Gamma_\Phi^*$, is precisely the Hermitian conjugate of $\Gamma_\Phi$.

\section{Noise deconvolution\label{sec:SEC2}}
Noise in open quantum systems is modeled in terms of quantum channels \citep{book:Nielsen}, namely, linear CPTP operations $\mathcal{N}$ that map the ideal, i.e. noiseless, state $\rho$ into a noisy one $\rho' = \mathcal{N}(\rho)$. Different choices of $\mathcal{N}$ correspond to different noise models, e.g. the bit-flip, the dephasing, the depolarizing, or the amplitude damping noises \citep{book:Nielsen}. With the state modified by $\mathcal{N}$, any measurement performed on an observable $O$ becomes noisy, namely its expectation value $\langle O \rangle_{\rho}$ is mapped to $\langle O \rangle_{\rho'}$. In this section we introduce noise deconvolution as a technique that provides the ideal expectation value $\langle O \rangle_{\rho}$ of arbitrary operators, using the noisy data obtained on $\rho'$. Our protocol applies to any mathematically invertible noise model, i.e. one for which $\Gamma_{\mathcal{N}}^{-1}$ exists, even when the inverse channel is not physically implementable, i.e. when $\mathcal{N}^{-1}$ is not CPTP.

We consider the noise deconvolution equation derived in \citep{art:QUIT_NoiseDeconvolution}
\begin{equation}
    \left\langle O \right\rangle_{\rho} = \left\langle \mathcal{N}^{*^{-1}}(O) \right\rangle_{\rho'} \ ,
    \label{eq:expval}
\end{equation}
which yields the ideal, i.e. noiseless, expectation value of an arbitrary observable $\left\langle O \right\rangle_{\rho}$ (or even of a non-observable operator) by evaluating, instead, the inverted adjoint map $\mathcal{N}^{*^{-1}}(O)$ over the noisy state $\rho' = \mathcal{N}(\rho)$. In other proposals \citep{art:Beny}, similar inversions are implemented physically (often only approximately) by introducing suitable modifications to the channel. Here, instead, we just use the noisy measured data to calculate the noiseless value. In other words and in contrast to previous proposals, our noise-inversion reconstruction is implemented entirely and solely at the data processing stage.

In the vectorized framework, \cref{eq:expval} reads
\begin{equation}
	\dbraket{\rho}{O} = \langle\bra{\rho '} \Gamma_{\mathcal{N}}^{*^{-1}} \ket{O}\rangle \ ,
	\label{eq:PTM-deconvolution}
\end{equation}
where $\Gamma_{\mathcal{N}}^{*^{-1}}$ is the inverse adjoint PTM. 

We start with the deconvolution of multiqubit Pauli channels, for which the vectorization guarantees a quadratic speedup in efficiency over the general case (treated below). The current method generalizes and supersedes the single-qubit analysis presented in \citep{art:QUIT_NoiseDeconvolution}. In this case we derive a reconstruction formula without completely inverting the noise channel, instead using only some components of the PTM. We first apply this procedure to an observable made by one of the possible $n$-fold tensor products of the Pauli matrices, i.e. one of the elements of the basis, then we extend our considerations to the expectation value of a generic observable that takes all the contributions from the Pauli basis. At the end of the section, we generalize our discussion to the non-Pauli case.

In the Pauli basis, the Kraus representation \citep{book:Nielsen} of an $n$-qubit Pauli channel is \citep{art:Flammia}
\begin{equation}
	\mathcal{N}(\rho) = \sum_{j=0}^{d^2-1}\beta_j \mathcal{P}_j \rho \mathcal{P}_j \ ,
	\label{eq:Pauli-channel_PauliBasis}
\end{equation}
with $\sum_{j}\beta_j = 1$ and $\beta_j \geq 0 \ \forall j$. Such channels represent an example of random unitary maps \citep{art:Audenaert_RandomUnitary}, where each unitary $\mathcal{P}_j$ is applied with probability $\beta_j$. In the case of \cref{eq:Pauli-channel_PauliBasis}, this produces a depolarizing contraction of the Bloch hypersphere, whose intensity and direction depend on the choice of $\beta_j$ \citep{art:Siudzinska_GeneralizedPauliChannels}.

A Pauli channel yields a diagonal PTM
\begin{equation}
	\Gamma_{\mathcal{N}} = \sum_{j=0}^{d^2-1} \lambda_j \dket{j}\dbra{j} \ ,
	\label{eq:DiagonalPTM}
\end{equation}
where $\lambda_j = \dbra{j} \Gamma_{\mathcal{N}} \dket{j}$. As the only requirement, we ask $\Gamma_{\mathcal{N}}$ to be mathematically invertible, i.e. that $\lambda_j \neq 0 \ \forall j$.\footnote{An example of a non-invertible map is represented by a depolarizing channel with $p=1$, for which the Bloch hypersphere collapses to a single point.} For now, we also assume that the noise parameters are known so that a theoretical computation of the PTM is always possible. This last assumption simplifies our analysis, but it is not necessary. We discuss the case of unknown noise in the following paragraphs.

Since all the Kraus operators in \cref{eq:Pauli-channel_PauliBasis} are Hermitian, the adjoint of the Pauli channel is the channel itself \citep{art:QUIT_NoiseDeconvolution}, yielding $\Gamma_{\mathcal{N}}^{*^{-1}} = \Gamma_{\mathcal{N}}^{-1}$.  Then the inverse PTM is diagonal too and its components read $1/\lambda_j$.

First, consider an observable given by the $k$th element of the basis, which in vectorized notation corresponds to a column vector with only one non-zero component $\ket{O}\rangle = \ket{k}\rangle$. The action of the inverse PTM yields
\begin{equation}
	\Gamma^{-1}_{\mathcal{N}} \dket{O}  = \Gamma_{kk}^{-1}\dket{k} = \frac{1}{\lambda_k} \dket{k} \ . 
	\label{eq:InverseDiagonalPTM}
\end{equation}
Using \cref{eq:PTM-deconvolution}, the expectation value follows as
$\dbraket{\rho}{k} = \lambda_k^{-1} \dbraket{\rho'}{k}$, which in standard non-vectorized notation reads
\begin{equation}
	\langle \mathcal{P}_{k} \rangle_{\rho} = \frac{d}{\Tr[\mathcal{P}_k \mathcal{N}(\mathcal{P}_k)]} \langle \mathcal{P}_{k} \rangle_{\rho'} \ .
	\label{eq:deconv}
\end{equation}
This shows that for an $n$-qubit Pauli channel, the deconvolution of the expectation value of the $k$th element of the basis is always obtained as a rescaling of the noisy outcome, which depends on the $k$th element on the diagonal of the inverse PTM.

We now discuss the deconvolution of a generic observable $O$ subject to an arbitrary $n$-qubit Pauli channel. We expand $O$ in terms of the basis vectors $\dket{O} = \sum_k O_k \dket{k}$ to compute the right-hand side of \cref{eq:PTM-deconvolution}. Applying the same strategy, each component $O_k$ must be rescaled by the corresponding element on the diagonal of the inverse PTM. Then, the reconstructed measurement outcome reads
\begin{equation}
	\langle O \rangle_\rho = \sum_{k=0}^{d^2-1} \frac{d}{\Tr[\mathcal{P}_k \mathcal{N}(\mathcal{P}_k)]} O_k \langle \mathcal{P}_k \rangle_{\rho'} \ .
	\label{eq:FinalDeconvolution}
\end{equation}
Namely, we obtain the ideal noiseless outcome (over $\rho$) by processing the noisy expectation values (over $\rho' = \mathcal{N}(\rho)$) of those Pauli basis elements $\mathcal{P}_{k}$ that contribute to the expansion of $O$.

This procedure works for any number $n$ of (even correlated) qubits, and it involves only local measurements on each element of the basis. The computational complexity, i.e. the number of factors required, scales with the number $r=1,2,...,d^2$ of non-zero elements of the expansion of $O$ on the Pauli basis. In the trivial case in which $O$ is exactly an $n$-qubit Pauli matrix, the deconvolution always requires a single measurement and one PTM entry, for any number of qubits. On the other hand, when $O$ is a generic observable, the deconvolution requires a measurement on the entire basis and the computation of all the $d^2$ diagonal components of the PTM. In any case, this considerably reduces the number of computations of the PTM entries over the inversion-based method of \citep{art:QUIT_NoiseDeconvolution}, which for $n$ qubits always requires $d^4$ operations.

So far we have considered $n$-qubit Pauli channels whose parameters are known \emph{a priori}. Although it simplifies the analysis, this assumption is not necessary: we can still estimate these parameters without running a full process tomography of the channel \citep{book:Nielsen,art:Mohseni}, which is intractable for large $n$. For a general $n$-qubit Pauli channel $\mathcal{N}$, with unknown Kraus representation coefficients in \cref{eq:Pauli-channel_PauliBasis}, the deconvolution is achieved by means of the following characterization. Prepare the system in the state 
\begin{equation}
	\rho_k = \frac{\Unit + \mathcal{P}_k}{d} \quad \text{for } k \neq 0 \ ,
	\label{eq:ChInputState}
\end{equation}
with $\Unit$ the $n$-qubit identity operator (see \cref{sec:App} for a discussion on positivity). Assuming that $\rho_k$ evolves to $\rho_k' = \mathcal{N}(\rho_k)$, i.e. that the channel acts independently of the preparation scheme of \cref{eq:ChInputState}, unitality guarantees that
\begin{equation}
	\rho_k' = \frac{\Unit + \mathcal{N}(\mathcal{P}_k)}{d} \ .
	\label{eq:ChNoisyState}
\end{equation}
The diagonal entries of the PTM then read
\begin{align}
	&\Gamma_{kk} =  \langle \mathcal{P}_k \rangle_{\rho_k'}  \quad \text{for } k \neq 0 \ ,
	\label{eq:PTMfromCharachterization}
\end{align}
with $\langle\mathcal{P}_k \rangle_{\rho_k'} = \Tr[\mathcal{P}_k \mathcal{N}(\rho_k)]$, which can be used in the deconvolution formula in \cref{eq:deconv} or \cref{eq:FinalDeconvolution}. This means that, even without knowing the noise parameters, we can still obtain the reconstruction factors by measuring the $k$th element of the Pauli basis over a noisy state initially prepared as in \cref{eq:ChInputState}. Again, this requires at most $r$ operations, with $r = 1,2,...,d^2$ the number of non-zero components of $O$ in the Pauli basis. We summarize the entire procedure in \cref{alg:Deconvolution}.
\begin{algorithm}[H]
\caption{Multiqubit Pauli channel deconvolution\label{alg:Deconvolution}}\vspace{2pt}
\textbf{Input:} Observable $O$ \textbf{Optional:} Pauli channel $\mathcal{N}$\\[2pt]
\textbf{Result:} Noiseless expectation value $\langle O \rangle_{\rho}$\vspace{2pt}
\begin{algorithmic}[1]
\State construction of the basis \Comment{$\{\mathcal{P}_k\}$ for $k = 0,1,...,d^2-1$} \vspace{2pt}%

\State projection of $O$ on the basis \Comment{$O_k = \Tr[O \mathcal{P}_k]/d$} \vspace{2pt}%

\For{$k$ such that $O_k \neq 0$} \vspace{1.5pt}%

	\State{noisy measurement on the basis} \Comment{$\langle \mathcal{P}_k \rangle_{\mathcal{N}(\rho)}$} \vspace{2pt}%
	\If{$\mathcal{N}$ is known} \vspace{2pt}%
	
		\State get PTM \Comment{$\Gamma_{kk} \gets \Tr[\mathcal{P}_k \mathcal{N}(\mathcal{P}_k)]/d$} \vspace{2pt}
	\Else \vspace{2pt}%
		\State state preparation \Comment{$\rho_k = (\Unit + \mathcal{P}_k)/d$} \vspace{2pt}
		\State PTM characterization \Comment{$\Gamma_{kk} \gets \langle \mathcal{P}_k \rangle_{\mathcal{N}(\rho_k)}$}
	\EndIf%
\EndFor
	\State{noise deconvolution} \Comment{$\langle O \rangle_{\rho} \gets \sum_k \Gamma_{kk}^{-1}O_k\langle \mathcal{P}_k \rangle_{\mathcal{N}(\rho)}$}
\end{algorithmic}
\end{algorithm}
Before considering the applications, we discuss the case of noise channels that cannot be expressed in the form of \cref{eq:Pauli-channel_PauliBasis}, i.e. whose Kraus representation contains at least one operator that is different from a Pauli $\mathcal{P}_k$. These channels describe purely quantum mechanical processes, e.g. a ``spontaneous emission'' which is modeled by the amplitude damping channel \citep{book:Nielsen}.

Non-Pauli channels are not self-adjoint and have a non-diagonal PTM, so $\Gamma_{\mathcal{N}}^{*^{-1}}$ cannot be simply obtained as in \cref{eq:InverseDiagonalPTM}. In this case, the tomographic reconstruction formula of a generic observable $O$ reads
\begin{equation}
	\langle O \rangle_\rho = \sum_{j,q=0}^{d^2-1} \big(\Gamma^{*^{-1}}\big)_{jq} O_q \langle \mathcal{P}_j \rangle_{\rho'} \ ,
	\label{eq:NonPauliDeconvolution}
\end{equation}
which, in the case of $O = \mathcal{P}_k$ ($O = \dket{k}$ in the vectorized notation), reduces to
\begin{equation}
	\langle \mathcal{P}_k \rangle_\rho = \sum_{j=0}^{d^2-1} \big(\Gamma^{*^{-1}}\big)_{jk} \langle \mathcal{P}_j \rangle_{\rho'} \ .
	\label{eq:SimpleNonPauliDeconvolution}
\end{equation}
In this case, the deconvolution procedure still works, although less efficiently: it requires more computations and the complete calculation of the inverse adjoint PTM $\Gamma_{\mathcal{N}}^{*^{-1}}$. For Pauli channels, both equations consistently reduce to \cref{eq:deconv} and \cref{eq:FinalDeconvolution}. In the next section we consider the explicit deconvolution of a two-qubit amplitude damping correlated channel.

When noise is not described by a Pauli channel, as it often occurs in real devices, the quadratic speedup of \cref{eq:FinalDeconvolution} is not achievable. However, generic noise models are usually split in terms of Pauli and non-Pauli (but simpler) contributions. See for example \citep{art:QUIT_NoiseDeconvolution}, where the decoherence of Rigetti Aspen-9 is modeled as a composition of a dephasing (Pauli) and an amplitude damping (non-Pauli) channel. In the vectorized representation, the composition of multiple channels reduces to the product of their PTM. Even in this case, their deconvolution can be separately treated, with still a quadratic advantage on the Pauli terms, which then act as a rescaling of the non-Pauli rows or columns. Alternatively, twirling techniques can be employed to remove the off-diagonal elements of the PTM \citep{art:Flammia,art:Cai_PauliConjugation}, thus mapping the original channel to an effective Pauli one.

Similarly to the diagonal case, a straightforward characterization can be employed whenever the channel is unital and its parameters are unknown. Prepare the system in the state $\rho_k$ of \cref{eq:ChInputState} evolved to \cref{eq:ChNoisyState} and then iterate \cref{eq:PTMfromCharachterization} for all the elements $\mathcal{P}_j$ of the Pauli basis, yielding the PTM as
\begin{align}
	&\Gamma_{jk} =  \langle \mathcal{P}_j \rangle_{\rho_k'}  \quad \text{for } j, k \neq 0 \ ,
	\label{eq:NonPauliPTMfromCharachterization}
\end{align}
which can be inverted and substituted in \cref{eq:NonPauliDeconvolution} or \cref{eq:SimpleNonPauliDeconvolution}.

Both \cref{eq:PTMfromCharachterization} and  \cref{eq:NonPauliPTMfromCharachterization} provide a direct tomographic reconstruction of the channel PTM, as long as the channel is unital. This represents an alternative to the standard approach that recovers the channel Kraus representation through standard quantum process tomography (a generalization of this procedure to non-unital channels and a comparison with quantum process tomography will be discussed in \citep{art:QUIT_InPreparation}).
\begin{figure*}[t]
	\centering
	\subfloat[\label{fig:depo3_comp}]{\includegraphics[width = 0.5 \textwidth]{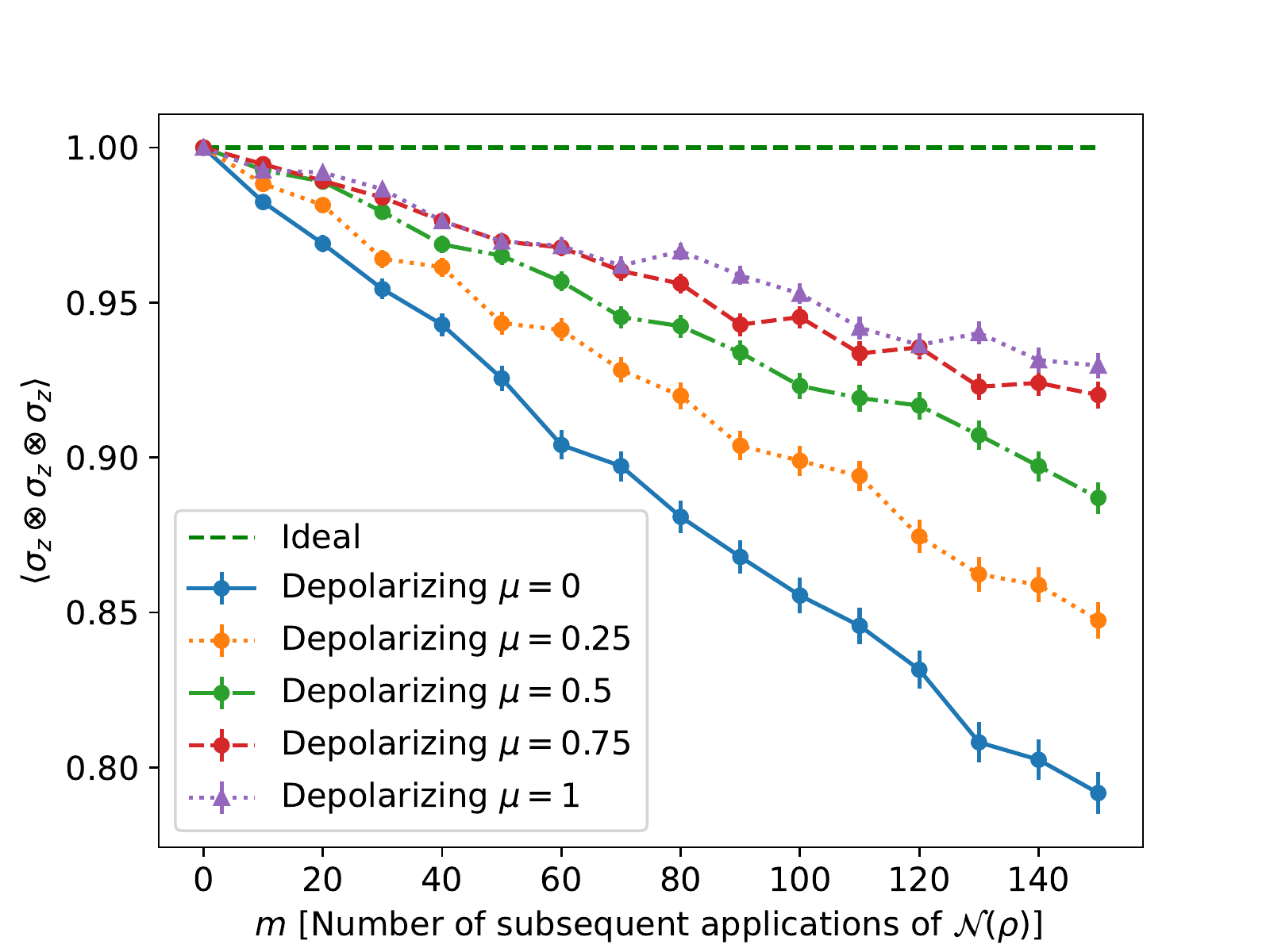}}%
	\subfloat[\label{fig:depo3}]{\includegraphics[width = 0.5 \textwidth]{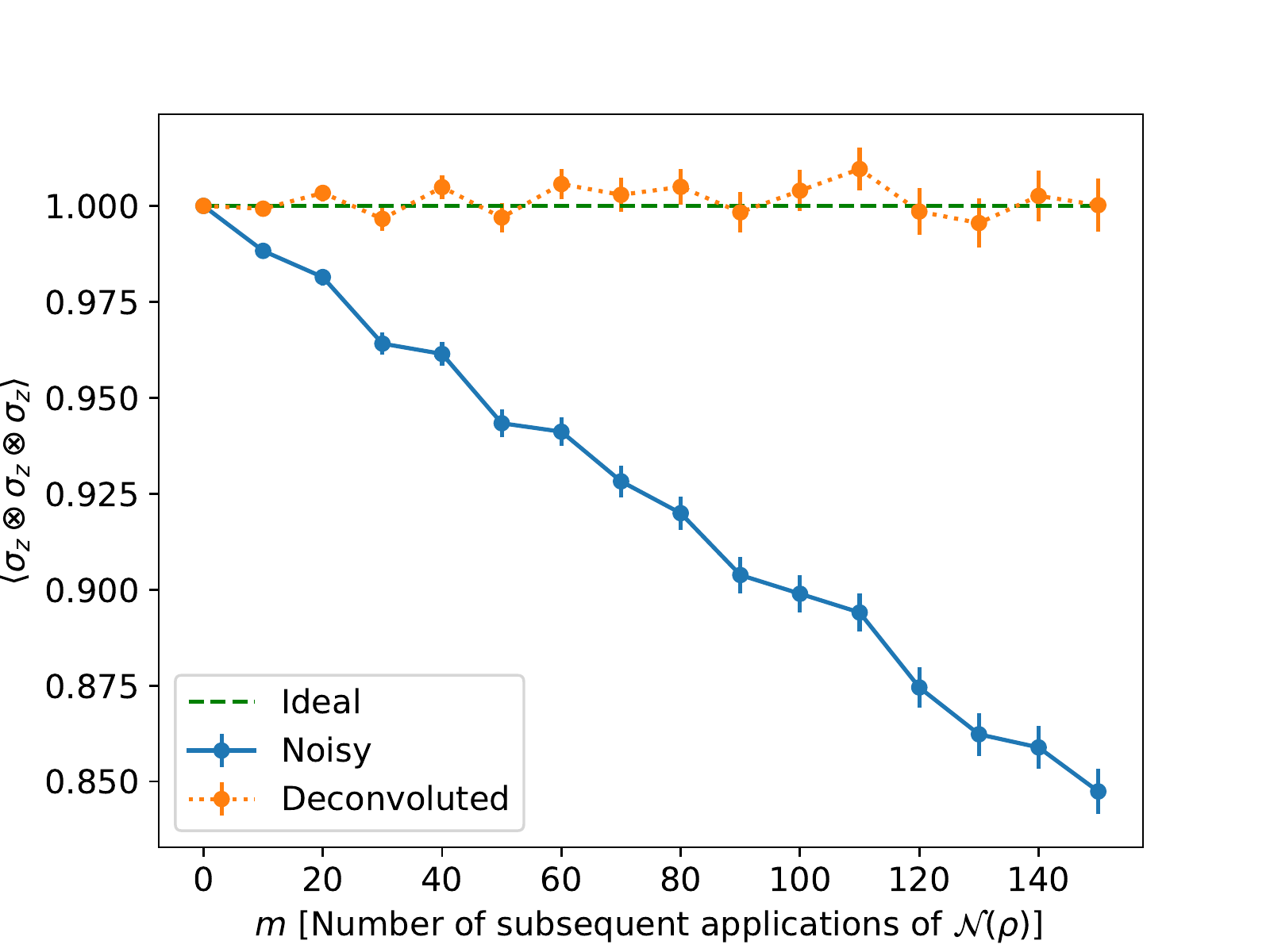}}
	\caption{(a) Simulated noise effect on $\langle \sigma_z^{\otimes3} \rangle$ for a three-qubit depolarizing correlated channel, where the degree of correlation is parametrized by $\mu \in [0,1]$ and with probability $q = 0.00052$. The system is initially prepared in the noiseless state $\rho = \ket{000}\!\bra{000}$. The ideal expectation value $\langle \sigma_z^{\otimes3}\rangle_{\rho} = 1$ is represented by a dashed horizontal line. The simulation is performed using Qiskit Aer for $8192$ shots, with respect to the number $m$ of subsequent applications of the noise map $\mathcal{N}^m(\rho)$. The noise effect is plotted for different values of $\mu$. Note that when the value of $\mu$ decreases, the correlation of qubits increases and the effect of the noise decreases. (b) Simulated deconvolution of the noise channel for $\mu = 0.25$. We compare the noisy output with the noiseless one, recovered with the noise deconvolution procedure presented here.} 
\end{figure*}

\section{Applications\label{sec:SEC3}}
In this section we discuss noise deconvolution of several examples of Pauli and non-Pauli noise models. To analyze correlations we demonstrate our general method on a specific class of channels, in which a parameter $\mu$ measures the amount of correlations \citep{art:QUIT_CorrelatedNoise,art:QUIT_Pauli,art:QUIT_QuantumCapacity,art:QUIT_FullyCorrelatedDamping,art:QUIT_CorrelatedDamping}.   

We start with the class of Pauli correlated channels defined in \citep{art:QUIT_CorrelatedNoise,art:QUIT_Pauli,art:QUIT_QuantumCapacity}, whose Kraus representation reads
\begin{equation}
	\mathcal{N}(\rho) = \sum_{\{\alpha_i\}=0}^{3} p_{\alpha_1 \alpha_2 ... \alpha_n} A_{\alpha_1 \alpha_2 ... \alpha_n} \rho A^{\dagger}_{\alpha_1 \alpha_2 ... \alpha_n} \ ,
	\label{eq:Pauli-channel}
\end{equation}
with $i \in \{1,2,...,n\}$ and Kraus operators
\begin{equation}
	A_{\alpha_1 \alpha_2 ... \alpha_n} = \sigma_{\alpha_1} \otimes \sigma_{\alpha_2} \otimes ... \otimes \sigma_{\alpha_n} \ ,
\end{equation}
with $p_{\alpha_1 \alpha_2 ... \alpha_n} = p_{\alpha_1}p_{\alpha_2|\alpha_1}...p_{\alpha_n|\alpha_{n-1}}$ given by the Markov chain \citep{art:QUIT_QuantumCapacity,art:Hamada}
\begin{align}	
	&p_{\alpha_{j}|\alpha_{i}} = (1-\mu)p_{\alpha_j} + \mu\delta_{\alpha_i \alpha_j} \ , \\
	&\vec{p} = [1-p,p_x,p_y,p_z]^{T} \ ,
\end{align}
with $p = p_x + p_y + p_z$. This kind of correlation is not a requirement of our deconvolution technique, it provides only an example to test our protocol. The parameter $\mu \in [0,1]$ represents the degree of correlation between couple of qubits. On the one hand, $\mu = 0$ represents a memoryless channel, which is when the qubits are completely uncorrelated. On the other hand, $\mu = 1$ describes a full-memory channel, i.e. when the qubits are completely correlated. The Pauli channels generalize noise models such as the bit-flip, the bit-phase-flip, the dephasing and the depolarizing which can be re-obtained from \cref{eq:Pauli-channel} by a proper choice of $\vec{p}$. For example, we obtain  the bit-flip channel when $\vec{p} = [1-p,p,0,0]^T$ or the depolarizing channel when $\vec{p} = [1-p,p/3,p/3,p/3]^T$. 

As the first example, consider the $n$-qubit observable $\sigma_z^{\otimes n}$, whose expectation value is affected by a bit-flip correlated noise \citep{book:Nielsen, art:QUIT_QuantumCapacity}, i.e. with $\vec{p} = [1-p,p,0,0]^T$. In this case, the deconvolution formula yields
\begin{equation} 
\langle \sigma_z^{\otimes n} \rangle_{\rho} = f_n(p,\mu) \langle \sigma_z^{\otimes n} \rangle_{\rho'} \ ,
\end{equation}
where the reconstruction factor is given by
\begin{equation}
	f_n(p,\mu) = \frac{2^n}{\Tr[\sigma_z^{\otimes n}\mathcal{N}(\sigma_z^{\otimes n})]} \ .
\end{equation}%
By direct computation, we obtain
\begin{align}
	& f_{1}(p,\mu) = \frac{1}{1-2p} \ , \\
	& f_{2}(p,\mu) = \frac{1}{1 + 4 (\mu - 1) (1 - p) p} \ , \\
	& f_{3}(p,\mu) = \frac{1}{(1 - 2 p) [1 + 4 (\mu - 1)^2 (p - 1) p]} \ . 
\end{align}
For a depolarizing correlated channel \citep{art:QUIT_CorrelatedNoise,art:QUIT_QuantumCapacity}, i.e. when $\vec{p} = [1-3q/4,q/4,q/4,q/4]^{T}$, the same computation gives
\begin{align}
	& f_{1}(q,\mu) = \frac{1}{1-q} \ , \\
	& f_{2}(q,\mu) = \frac{1}{1 + (\mu -1) (2 - q) q} \ , \\
	& f_{3}(q,\mu) = \frac{1}{(1-q) [1 + (\mu - 1)^2 (q - 2) q]} \label{eq:example} \ . 
\end{align}
When the qubits are completely uncorrelated ($\mu = 0$), the noise deconvolution factorizes in terms of the single-qubit contributions. For qubits that are completely correlated ($\mu = 1$), the noise has no effect when $n$ is even, while it is corrected by a single-qubit contribution when $n$ is odd.
In \cref{fig:depo3_comp} we plot the effect of a three-qubit depolarizing correlated channel on $\langle \sigma_z^{\otimes 3} \rangle$ with respect to the number of subsequent applications of the noise map, parametrized in terms of $\mu$. In \cref{fig:depo3} we perform a simulation of the deconvolution, which successfully reproduces the noiseless expectation value.

As a non-Pauli example, we consider a two-qubit amplitude damping correlated channel \citep{art:QUIT_FullyCorrelatedDamping,art:QUIT_CorrelatedDamping}. For a single-qubit system, the amplitude damping channel is defined in terms of two Kraus operators
\begin{equation}
	E_0 = \begin{pmatrix}
	1 & 0 \\ 0 & \sqrt{\eta}
	\end{pmatrix} \ E_1 = \begin{pmatrix}
	0 & \sqrt{1-\eta} \\ 0 & 0
	\end{pmatrix} ,
\end{equation}
where $1-\eta$ represents the probability of the system losing a qubit, e.g. by emitting a photon \citep{book:Nielsen}, and it plays the role of channel transmissivity, e.g. in the case of optical fibers \citep{art:QUIT_CorrelatedDamping}.

The two-qubit amplitude damping correlated channel can be obtained as a convex combination of a memoryless amplitude damping channel $\mathcal{N}_0$ with a memoryful one $\mathcal{N}_1$ 
\begin{equation}
	\mathcal{N}(\rho) = (1-\mu)\mathcal{N}_0(\rho) + \mu \mathcal{N}_1(\rho) \ ,
\end{equation}
where
\begin{equation}
	\mathcal{N}_0(\rho) = \sum_{j=0}^3 A_j \rho A_j^{\dagger} \ , \quad \mathcal{N}_1(\rho) = \sum_{j=0}^1 B_j \rho B_j^{\dagger} \ , 
\end{equation}
with Kraus operators $A_0 = E_0 \otimes E_0$, $A_1 = E_0 \otimes E_1$, $A_2 = E_1 \otimes E_0$, $A_3 = E_1 \otimes E_1$ and
\begin{equation}
	B_0 = \begin{pmatrix}
	1 & 0 & 0 & 0 \\ 0 & 1 & 0 & 0 \\ 0 & 0 & 1 & 0 \\ 0 & 0 & 0 & \sqrt{\eta}
	\end{pmatrix} \
	B_1 = \begin{pmatrix}
	0 & 0 & 0 & \sqrt{1 - \eta} \\ 0 & 0 & 0 & 0 \\ 0 & 0 & 0 & 0 \\ 0 & 0 & 0 & 0
	\end{pmatrix} .
\end{equation}
Consider the two-qubit observables $\sigma_x^{\otimes 2}$, $\sigma_y^{\otimes 2}$ and $\sigma_z^{\otimes 2}$ with noisy state $\rho' = \mathcal{N}(\rho)$. From the deconvolution formula and the computation of the inverse adjoint PTM, it follows that
\begin{multline}
	\langle \sigma_x^{\otimes 2} \rangle_{\rho} = f(\eta,\mu)  \big\{ \big[ 2\eta(1-\mu) + \mu(\sqrt{\eta}+1) \big]\langle \sigma_x^{\otimes 2} \rangle_{\rho'}  \\
	+ \mu (\sqrt{\eta}-1)\langle \sigma_y^{\otimes 2} \rangle_{\rho'} \big\} \ ,
	\label{eq:AmpDampSigmaXX}
\end{multline} 
while $\langle \sigma_y^{\otimes 2} \rangle_{\rho}$ is obtained from \cref{eq:AmpDampSigmaXX} through the index exchange $x \leftrightarrow y$, and that
\begin{multline}
	\langle \sigma_z^{\otimes 2} \rangle_{\rho} = g(\eta,\mu) \big[ (\mu-1)^2(\eta-1)^2 + \langle \sigma_z^{\otimes 2} \rangle_{\rho'} \\ 
	 -(\mu-1)(\eta-1) \langle \Unit \otimes \sigma_z + \sigma_z \otimes \Unit \rangle_{\rho'} \big]  \ ,
\end{multline}
with 
\begin{align}
	&f(\eta,\mu) = \frac{1}{2\left[\mu\left(\eta-\sqrt{\eta}\right) - \eta\right] \left[\mu(\eta - 1) - \eta\right]} \ , \\
	&g(\eta,\mu) = \frac{1}{\left[ \eta + \mu(1 - \eta) \right]^2} \ .
\end{align}
Note that when the qubits are completely uncorrelated ($\mu = 0$) the noise deconvolution factorizes in terms of the single-qubit contributions.\footnote{See \citep{art:QUIT_NoiseDeconvolution} for the deconvolution of the single-qubit amplitude damping channel, with $\eta = 1 - \gamma$.} When the two qubits are completely correlated ($\mu = 1$) the noise has no effect on $\langle \sigma_z^{\otimes 2} \rangle$.

\section{Conclusions}
We illustrated a procedure for the deconvolution of multiqubit noise described by mathematically invertible quantum channels: it returns the ideal expectation value of arbitrary observables from the noisy data. 

In the case of Pauli channels, our prescription bypasses the inversion of the noise map, providing the deconvolution from a set of Pauli measurements rescaled by a few components of the PTM (where the number of factors is quadratically reduced with respect to the general scenario). As discussed and shown in the simulation, this analysis can be applied to any example of multiqubit Pauli noise, e.g. the bit-flip or the depolarizing correlated channels. Then, we presented a characterization technique that provides the necessary PTM entries as a set of Pauli measurements on a specific class of input states, guaranteeing the deconvolution whenever a theoretical approach is not possible, e.g. when the parameters of the channel are unknown.

Finally, we discussed the deconvolution of noise that does not belong to the class of multiqubit Pauli channels. Our procedure successfully applies also to these cases, while less efficiently and requiring the complete computation (and inversion) of the adjoint PTM.

\section*{Acknowledgments}
This work received support from MIUR Dipartimenti di Eccellenza 2018-2022 Project No. F11I18000680001, from EU H2020 QuantERA ERA-NET Cofund in Quantum Technologies, Quantum Information and Communication with High-dimensional Encoding (QuICHE) under Grant Agreements No. 731473 and No. 101017733, from the U.S. Department of Energy, Office of Science, National Quantum Information Science Research Centers, Superconducting Quantum Materials and Systems Center (SQMS) under Contract No. DE-AC02-07CH11359, and from the National Research Centre for HPC, Big Data and Quantum Computing (ICSC). We thank S. Mangini for helpful discussions.

\appendix*
\section{POSITIVITY OF THE CHARACTERIZATION STATE\label{sec:App}}
In this appendix we show that the operator used in the noise characterization procedure fulfills all the requirements of a density operator, i.e. Hermiticity, unit trace and positive semidefiniteness \citep{book:Nielsen}, namely, that it represents a state of the $n$-qubit system.

Consider the operators of \cref{eq:ChInputState}, with $k = 1,2,...,d^2-1$. While Hermiticity is straightforward, unit trace follows from $\Tr[\mathcal{P}_k] = 0$ for $k \neq 0$.

To prove positive semidefiniteness, we refer to the characterization of density operators in terms of the coherence vector representation discussed in \citep{art:Byrd}. Let $S_m$ be the coefficients of the characteristic polynomial of the $d \times d$ matrix representation of $\rho$,\footnote{In this appendix we use the standard operator framework rather than the vectorized representation.} given by
\begin{equation}
	S_m = \frac{1}{m}\sum_{j=1}^m (-1)^{j-1}\Tr[\rho^j]S_{m-j} \ ,
	\label{eq:RecursiveRelations}
\end{equation}
with $S_0=1$ and $m = 0,1,...,d$.\footnote{Here \cref{eq:RecursiveRelations} is obtained by writing Eq. (24) from \citep{art:Byrd} as a series, with $S_0 = 1$ the coefficient of the highest-order term in the characteristic polynomial of $A$.}

A necessary and sufficient condition for $\rho$ being positive semidefinite is that $S_m \geq 0$ $\forall m$.
\begin{theorem}
The operator $\rho = (1 + \mathcal{P}_k)/d$ is positive semidefinite.
\end{theorem}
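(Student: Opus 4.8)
The plan is to exploit the single structural fact that makes $\mathcal{P}_k$ special: for $k\neq 0$ it is an $n$-fold tensor product of Pauli operators, hence Hermitian, traceless ($\Tr\mathcal{P}_k = 0$), and an involution ($\mathcal{P}_k^2 = \Unit$). The involution property forces $(\Unit + \mathcal{P}_k)^j = 2^{j-1}(\Unit + \mathcal{P}_k)$ for every integer $j\ge 1$, and this identity is the workhorse of the whole argument. The quickest conclusion from here is simply that $\mathcal{P}_k$ has spectrum $\{-1,+1\}$, so $\rho = (\Unit+\mathcal{P}_k)/d$ has spectrum $\{0,2/d\}\subset[0,\infty)$; but to stay inside the coherence-vector framework of \cref{eq:RecursiveRelations} that the appendix has set up, I would instead compute the power sums $\Tr[\rho^j]$ and feed them into the recursion for $S_m$.

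First I would compute $\Tr[\rho^j] = d^{-j}\Tr[(\Unit+\mathcal{P}_k)^j] = 2^{j-1}d^{-j}\Tr[\Unit+\mathcal{P}_k] = 2^{j-1}d^{-j}\cdot d = (2/d)^{j-1}$, where the last-but-one equality uses the tracelessness of $\mathcal{P}_k$. I would then note that this is exactly the sequence of power sums of the diagonal matrix $D = \diag(2/d,\dots,2/d,0,\dots,0)$ carrying $d/2$ copies of $2/d$ and $d/2$ zeros, which is meaningful since $d = 2^n$ is even.

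Next I would substitute $\Tr[\rho^j] = (2/d)^{j-1}$ into \cref{eq:RecursiveRelations} and prove by induction on $m$ that $S_m = \binom{d/2}{m}(2/d)^m$ for $0\le m\le d/2$ and $S_m = 0$ for $d/2 < m \le d$; every one of these values is manifestly nonnegative, so the Byrd criterion ($S_m\ge 0\ \forall m$) yields positive semidefiniteness. Equivalently, and avoiding the closed form altogether, one can argue that each $S_m$ is a fixed polynomial in $\Tr[\rho],\dots,\Tr[\rho^m]$ via \cref{eq:RecursiveRelations}, and since these power sums coincide with those of the positive semidefinite matrix $D$, one gets $S_m(\rho) = S_m(D) \ge 0$ directly.

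I do not expect a real obstacle: the only nonroutine move is spotting the collapse $(\Unit+\mathcal{P}_k)^j = 2^{j-1}(\Unit+\mathcal{P}_k)$, after which both the power-sum computation and the induction (or the comparison with $D$) are elementary. The bookkeeping points worth flagging are that the statement genuinely uses $k\neq 0$ — for $k=0$ one has $\mathcal{P}_0=\Unit$ and $\rho = 2\Unit/d$, still positive but with a different spectrum — and that $d$ being even is precisely what makes $d/2$, the rank of $\rho$ and the multiplicity cutoff in the formula for $S_m$, an integer.
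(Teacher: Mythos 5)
Your proposal is correct and follows essentially the same route as the paper: both hinge on the collapse $(\Unit+\mathcal{P}_k)^j \propto (\Unit+\mathcal{P}_k)$ (equivalently, the idempotency of $(\Unit+\mathcal{P}_k)/2$), feed the resulting power traces into the recursion \cref{eq:RecursiveRelations}, and conclude $S_m\ge 0$ with the $S_m$ vanishing beyond $m=d/2$. The only cosmetic differences are that you work with $\rho$ directly instead of the rescaled operator $A=(\Unit+\mathcal{P}_k)/2$ and state the closed form $S_m=\binom{d/2}{m}(2/d)^m$ rather than the paper's one-step recursion.
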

\begin{proof}
Consider the operator $A = (1 + \mathcal{P}_k)/2$, which is positive semidefinite if and only if $\rho$ does. A direct computation yields $A^j = A$ and $\Tr[A^j] = d/2$. For $A$, \cref{eq:RecursiveRelations} reads
\begin{equation}
	S_m = \frac{d}{2m}\sum_{j=1}^m (-1)^{j-1}S_{m-j} \ .
	\label{eq:RecursiveRelationsforA}
\end{equation} 
Extracting the first term of the series and collecting a minus sign, we get
\begin{equation}
	S_m = \frac{d}{2m} \left( S_{m-1} - \sum_{j=2}^m (-1)^{j-2}S_{m-j} \right) \ .
\end{equation} 
Translating $j \to j + 1$ (with the sum now running from $1$ to $m-1$) and writing \cref{eq:RecursiveRelationsforA} for $S_{m-1}$, we obtain a recursive relation
\begin{equation}
	S_m = \frac{d}{2m}\left(1 - \frac{2(m-1)}{d} \right)S_{m-1} \ .
\end{equation}
With $\delta = 1 + d/2$, this yields 
\begin{align}
	&S_m > 0 \quad \text{for} \ 0 \leq m < \delta \ , \\
	&S_m = 0 \quad \text{for} \ \delta \leq m \leq d \ ,
\end{align}
which implies that $S_m \geq 0$ $\forall m$. This result does not depend on $k$, nor do the coefficients $S_m$ and the roots of the characteristic polynomial, i.e. the eigenvalue of $A$. Hence, $A$ is positive semidefinite for any choice of $\mathcal{P}_k$ and consequently so is $\rho$.
\end{proof}

\bibliography{refs.bib}

\begin{thebibliography}{30}%
\makeatletter
\providecommand \@ifxundefined [1]{%
 \@ifx{#1\undefined}
}%
\providecommand \@ifnum [1]{%
 \ifnum #1\expandafter \@firstoftwo
 \else \expandafter \@secondoftwo
 \fi
}%
\providecommand \@ifx [1]{%
 \ifx #1\expandafter \@firstoftwo
 \else \expandafter \@secondoftwo
 \fi
}%
\providecommand \natexlab [1]{#1}%
\providecommand \enquote  [1]{``#1''}%
\providecommand \bibnamefont  [1]{#1}%
\providecommand \bibfnamefont [1]{#1}%
\providecommand \citenamefont [1]{#1}%
\providecommand \href@noop [0]{\@secondoftwo}%
\providecommand \href [0]{\begingroup \@sanitize@url \@href}%
\providecommand \@href[1]{\@@startlink{#1}\@@href}%
\providecommand \@@href[1]{\endgroup#1\@@endlink}%
\providecommand \@sanitize@url [0]{\catcode `\\12\catcode `\$12\catcode
  `\&12\catcode `\#12\catcode `\^12\catcode `\_12\catcode `\%12\relax}%
\providecommand \@@startlink[1]{}%
\providecommand \@@endlink[0]{}%
\providecommand \url  [0]{\begingroup\@sanitize@url \@url }%
\providecommand \@url [1]{\endgroup\@href {#1}{\urlprefix }}%
\providecommand \urlprefix  [0]{URL }%
\providecommand \Eprint [0]{\href }%
\providecommand \doibase [0]{https://doi.org/}%
\providecommand \selectlanguage [0]{\@gobble}%
\providecommand \bibinfo  [0]{\@secondoftwo}%
\providecommand \bibfield  [0]{\@secondoftwo}%
\providecommand \translation [1]{[#1]}%
\providecommand \BibitemOpen [0]{}%
\providecommand \bibitemStop [0]{}%
\providecommand \bibitemNoStop [0]{.\EOS\space}%
\providecommand \EOS [0]{\spacefactor3000\relax}%
\providecommand \BibitemShut  [1]{\csname bibitem#1\endcsname}%
\let\auto@bib@innerbib\@empty
\bibitem [{\citenamefont {Nielsen}\ and\ \citenamefont
  {Chuang}(2010)}]{book:Nielsen}%
  \BibitemOpen
  \bibfield  {author} {\bibinfo {author} {\bibfnamefont {M.~A.}\ \bibnamefont
  {Nielsen}}\ and\ \bibinfo {author} {\bibfnamefont {I.~L.}\ \bibnamefont
  {Chuang}},\ }\href {https://doi.org/10.1017/CBO9780511976667} {\emph
  {\bibinfo {title} {Quantum Computation and Quantum Information}}},\ \bibinfo
  {edition} {10th}\ ed.\ (\bibinfo  {publisher} {Cambridge University Press},\
  \bibinfo {year} {2010})\BibitemShut {NoStop}%
\bibitem [{\citenamefont {Ariano}\ \emph {et~al.}(2003)\citenamefont {Ariano},
  \citenamefont {Maccone},\ and\ \citenamefont {Paini}}]{art:QUIT_Tomography}%
  \BibitemOpen
  \bibfield  {author} {\bibinfo {author} {\bibfnamefont {G.~M.~D.}\
  \bibnamefont {Ariano}}, \bibinfo {author} {\bibfnamefont {L.}~\bibnamefont
  {Maccone}},\ and\ \bibinfo {author} {\bibfnamefont {M.}~\bibnamefont
  {Paini}},\ }\bibfield  {title} {\bibinfo {title} {Spin tomography},\ }\href
  {https://doi.org/10.1088/1464-4266/5/1/311} {\bibfield  {journal} {\bibinfo
  {journal} {J. Opt. B}\ }\textbf {\bibinfo {volume} {5}},\ \bibinfo {pages}
  {77} (\bibinfo {year} {2003})}\BibitemShut {NoStop}%
\bibitem [{\citenamefont {Mohseni}\ \emph {et~al.}(2008)\citenamefont
  {Mohseni}, \citenamefont {Rezakhani},\ and\ \citenamefont
  {Lidar}}]{art:Mohseni}%
  \BibitemOpen
  \bibfield  {author} {\bibinfo {author} {\bibfnamefont {M.}~\bibnamefont
  {Mohseni}}, \bibinfo {author} {\bibfnamefont {A.~T.}\ \bibnamefont
  {Rezakhani}},\ and\ \bibinfo {author} {\bibfnamefont {D.~A.}\ \bibnamefont
  {Lidar}},\ }\bibfield  {title} {\bibinfo {title} {Quantum-process tomography:
  Resource analysis of different strategies},\ }\href
  {https://doi.org/10.1103/physreva.77.032322} {\bibfield  {journal} {\bibinfo
  {journal} {Phys. Rev. A}\ }\textbf {\bibinfo {volume} {77}},\ \bibinfo
  {pages} {032322} (\bibinfo {year} {2008})}\BibitemShut {NoStop}%
\bibitem [{\citenamefont {Bongioanni}\ \emph {et~al.}(2010)\citenamefont
  {Bongioanni}, \citenamefont {Sansoni}, \citenamefont {Sciarrino},
  \citenamefont {Vallone},\ and\ \citenamefont {Mataloni}}]{art:Mataloni}%
  \BibitemOpen
  \bibfield  {author} {\bibinfo {author} {\bibfnamefont {I.}~\bibnamefont
  {Bongioanni}}, \bibinfo {author} {\bibfnamefont {L.}~\bibnamefont {Sansoni}},
  \bibinfo {author} {\bibfnamefont {F.}~\bibnamefont {Sciarrino}}, \bibinfo
  {author} {\bibfnamefont {G.}~\bibnamefont {Vallone}},\ and\ \bibinfo {author}
  {\bibfnamefont {P.}~\bibnamefont {Mataloni}},\ }\bibfield  {title} {\bibinfo
  {title} {Experimental quantum process tomography of non-trace-preserving
  maps},\ }\href {https://doi.org/10.1103/PhysRevA.82.042307} {\bibfield
  {journal} {\bibinfo  {journal} {Phys. Rev. A}\ }\textbf {\bibinfo {volume}
  {82}},\ \bibinfo {pages} {042307} (\bibinfo {year} {2010})}\BibitemShut
  {NoStop}%
\bibitem [{\citenamefont {Georgescu}\ \emph {et~al.}(2014)\citenamefont
  {Georgescu}, \citenamefont {Ashhab},\ and\ \citenamefont
  {Nori}}]{art:Georgescu_Simulation}%
  \BibitemOpen
  \bibfield  {author} {\bibinfo {author} {\bibfnamefont {I.~M.}\ \bibnamefont
  {Georgescu}}, \bibinfo {author} {\bibfnamefont {S.}~\bibnamefont {Ashhab}},\
  and\ \bibinfo {author} {\bibfnamefont {F.}~\bibnamefont {Nori}},\ }\bibfield
  {title} {\bibinfo {title} {Quantum simulation},\ }\href
  {https://doi.org/10.1103/RevModPhys.86.153} {\bibfield  {journal} {\bibinfo
  {journal} {Rev. Mod. Phys.}\ }\textbf {\bibinfo {volume} {86}},\ \bibinfo
  {pages} {153} (\bibinfo {year} {2014})}\BibitemShut {NoStop}%
\bibitem [{\citenamefont {Temme}\ \emph {et~al.}(2017)\citenamefont {Temme},
  \citenamefont {Bravyi},\ and\ \citenamefont {Gambetta}}]{art:Temme}%
  \BibitemOpen
  \bibfield  {author} {\bibinfo {author} {\bibfnamefont {K.}~\bibnamefont
  {Temme}}, \bibinfo {author} {\bibfnamefont {S.}~\bibnamefont {Bravyi}},\ and\
  \bibinfo {author} {\bibfnamefont {J.~M.}\ \bibnamefont {Gambetta}},\
  }\bibfield  {title} {\bibinfo {title} {Error mitigation for short-depth
  quantum circuits},\ }\href {https://doi.org/10.1103/PhysRevLett.119.180509}
  {\bibfield  {journal} {\bibinfo  {journal} {Phys. Rev. Lett.}\ }\textbf
  {\bibinfo {volume} {119}},\ \bibinfo {pages} {180509} (\bibinfo {year}
  {2017})}\BibitemShut {NoStop}%
\bibitem [{\citenamefont {Endo}\ \emph {et~al.}(2018)\citenamefont {Endo},
  \citenamefont {Benjamin},\ and\ \citenamefont {Li}}]{art:Endo}%
  \BibitemOpen
  \bibfield  {author} {\bibinfo {author} {\bibfnamefont {S.}~\bibnamefont
  {Endo}}, \bibinfo {author} {\bibfnamefont {S.~C.}\ \bibnamefont {Benjamin}},\
  and\ \bibinfo {author} {\bibfnamefont {Y.}~\bibnamefont {Li}},\ }\bibfield
  {title} {\bibinfo {title} {Practical quantum error mitigation for near-future
  applications},\ }\href {https://doi.org/10.1103/PhysRevX.8.031027} {\bibfield
   {journal} {\bibinfo  {journal} {Phys. Rev. X}\ }\textbf {\bibinfo {volume}
  {8}},\ \bibinfo {pages} {031027} (\bibinfo {year} {2018})}\BibitemShut
  {NoStop}%
\bibitem [{\citenamefont {Endo}\ \emph {et~al.}(2021)\citenamefont {Endo},
  \citenamefont {Cai}, \citenamefont {Benjamin},\ and\ \citenamefont
  {Yuan}}]{art:Endo_Review}%
  \BibitemOpen
  \bibfield  {author} {\bibinfo {author} {\bibfnamefont {S.}~\bibnamefont
  {Endo}}, \bibinfo {author} {\bibfnamefont {Z.}~\bibnamefont {Cai}}, \bibinfo
  {author} {\bibfnamefont {S.~C.}\ \bibnamefont {Benjamin}},\ and\ \bibinfo
  {author} {\bibfnamefont {X.}~\bibnamefont {Yuan}},\ }\bibfield  {title}
  {\bibinfo {title} {Hybrid quantum-classical algorithms and quantum error
  mitigation},\ }\href {https://doi.org/10.7566/jpsj.90.032001} {\bibfield
  {journal} {\bibinfo  {journal} {J. Phys. Soc. Jpn.}\ }\textbf {\bibinfo
  {volume} {90}},\ \bibinfo {pages} {032001} (\bibinfo {year}
  {2021})}\BibitemShut {NoStop}%
\bibitem [{\citenamefont {Mangini}\ \emph {et~al.}(2022)\citenamefont
  {Mangini}, \citenamefont {Maccone},\ and\ \citenamefont
  {Macchiavello}}]{art:QUIT_NoiseDeconvolution}%
  \BibitemOpen
  \bibfield  {author} {\bibinfo {author} {\bibfnamefont {S.}~\bibnamefont
  {Mangini}}, \bibinfo {author} {\bibfnamefont {L.}~\bibnamefont {Maccone}},\
  and\ \bibinfo {author} {\bibfnamefont {C.}~\bibnamefont {Macchiavello}},\
  }\bibfield  {title} {\bibinfo {title} {Qubit noise deconvolution},\ }\href
  {https://doi.org/10.1140/epjqt/s40507-022-00151-0} {\bibfield  {journal}
  {\bibinfo  {journal} {{EPJ} Quantum Technol.}\ }\textbf {\bibinfo {volume}
  {9}},\ \bibinfo {pages} {29} (\bibinfo {year} {2022})}\BibitemShut {NoStop}%
\bibitem [{\citenamefont {Macchiavello}\ and\ \citenamefont
  {Palma}(2002)}]{art:QUIT_CorrelatedNoise}%
  \BibitemOpen
  \bibfield  {author} {\bibinfo {author} {\bibfnamefont {C.}~\bibnamefont
  {Macchiavello}}\ and\ \bibinfo {author} {\bibfnamefont {G.~M.}\ \bibnamefont
  {Palma}},\ }\bibfield  {title} {\bibinfo {title} {Entanglement-enhanced
  information transmission over a quantum channel with correlated noise},\
  }\href {https://doi.org/10.1103/physreva.65.050301} {\bibfield  {journal}
  {\bibinfo  {journal} {Phys. Rev. A}\ }\textbf {\bibinfo {volume} {65}},\
  \bibinfo {pages} {050301(R)} (\bibinfo {year} {2002})}\BibitemShut {NoStop}%
\bibitem [{\citenamefont {Macchiavello}\ \emph {et~al.}(2004)\citenamefont
  {Macchiavello}, \citenamefont {Palma},\ and\ \citenamefont
  {Virmani}}]{art:QUIT_Pauli}%
  \BibitemOpen
  \bibfield  {author} {\bibinfo {author} {\bibfnamefont {C.}~\bibnamefont
  {Macchiavello}}, \bibinfo {author} {\bibfnamefont {G.~M.}\ \bibnamefont
  {Palma}},\ and\ \bibinfo {author} {\bibfnamefont {S.}~\bibnamefont
  {Virmani}},\ }\bibfield  {title} {\bibinfo {title} {Transition behavior in
  the channel capacity of two-quibit channels with memory},\ }\href
  {https://doi.org/10.1103/PhysRevA.69.010303} {\bibfield  {journal} {\bibinfo
  {journal} {Phys. Rev. A}\ }\textbf {\bibinfo {volume} {69}},\ \bibinfo
  {pages} {010303(R)} (\bibinfo {year} {2004})}\BibitemShut {NoStop}%
\bibitem [{\citenamefont {Macchiavello}\ and\ \citenamefont
  {Sacchi}(2016)}]{art:QUIT_QuantumCapacity}%
  \BibitemOpen
  \bibfield  {author} {\bibinfo {author} {\bibfnamefont {C.}~\bibnamefont
  {Macchiavello}}\ and\ \bibinfo {author} {\bibfnamefont {M.~F.}\ \bibnamefont
  {Sacchi}},\ }\bibfield  {title} {\bibinfo {title} {Witnessing quantum
  capacities of correlated channels},\ }\href
  {https://doi.org/10.1103/physreva.94.052333} {\bibfield  {journal} {\bibinfo
  {journal} {Phys. Rev. A}\ }\textbf {\bibinfo {volume} {94}},\ \bibinfo
  {pages} {052333} (\bibinfo {year} {2016})}\BibitemShut {NoStop}%
\bibitem [{\citenamefont {Daems}(2007)}]{art:Daems}%
  \BibitemOpen
  \bibfield  {author} {\bibinfo {author} {\bibfnamefont {D.}~\bibnamefont
  {Daems}},\ }\bibfield  {title} {\bibinfo {title} {Entanglement-enhanced
  transmission of classical information in {P}auli channels with memory: Exact
  solution},\ }\href {https://doi.org/10.1103/PhysRevA.76.012310} {\bibfield
  {journal} {\bibinfo  {journal} {Phys. Rev. A}\ }\textbf {\bibinfo {volume}
  {76}},\ \bibinfo {pages} {012310} (\bibinfo {year} {2007})}\BibitemShut
  {NoStop}%
\bibitem [{\citenamefont {Huang}\ \emph {et~al.}(2011)\citenamefont {Huang},
  \citenamefont {He}, \citenamefont {Lu},\ and\ \citenamefont
  {Zeng}}]{art:Huang}%
  \BibitemOpen
  \bibfield  {author} {\bibinfo {author} {\bibfnamefont {P.}~\bibnamefont
  {Huang}}, \bibinfo {author} {\bibfnamefont {G.}~\bibnamefont {He}}, \bibinfo
  {author} {\bibfnamefont {Y.}~\bibnamefont {Lu}},\ and\ \bibinfo {author}
  {\bibfnamefont {G.}~\bibnamefont {Zeng}},\ }\bibfield  {title} {\bibinfo
  {title} {Quantum capacity of {P}auli channels with memory},\ }\href
  {https://doi.org/10.1088/0031-8949/83/01/015005} {\bibfield  {journal}
  {\bibinfo  {journal} {Phys. Scr.}\ }\textbf {\bibinfo {volume} {83}},\
  \bibinfo {pages} {015005} (\bibinfo {year} {2011})}\BibitemShut {NoStop}%
\bibitem [{\citenamefont {Jiang}\ \emph {et~al.}(2021)\citenamefont {Jiang},
  \citenamefont {Wang},\ and\ \citenamefont {Wang}}]{art:Jiang_InvertibleMaps}%
  \BibitemOpen
  \bibfield  {author} {\bibinfo {author} {\bibfnamefont {J.}~\bibnamefont
  {Jiang}}, \bibinfo {author} {\bibfnamefont {K.}~\bibnamefont {Wang}},\ and\
  \bibinfo {author} {\bibfnamefont {X.}~\bibnamefont {Wang}},\ }\bibfield
  {title} {\bibinfo {title} {Physical implementability of linear maps and its
  application in error mitigation},\ }\href
  {https://doi.org/10.22331/q-2021-12-07-600} {\bibfield  {journal} {\bibinfo
  {journal} {Quantum}\ }\textbf {\bibinfo {volume} {5}},\ \bibinfo {pages}
  {600} (\bibinfo {year} {2021})}\BibitemShut {NoStop}%
\bibitem [{\citenamefont {Lautenbacher}\ \emph {et~al.}(2022)\citenamefont
  {Lautenbacher}, \citenamefont {de~Melo},\ and\ \citenamefont
  {Bernardes}}]{art:Lautenbacher}%
  \BibitemOpen
  \bibfield  {author} {\bibinfo {author} {\bibfnamefont {L.}~\bibnamefont
  {Lautenbacher}}, \bibinfo {author} {\bibfnamefont {F.}~\bibnamefont
  {de~Melo}},\ and\ \bibinfo {author} {\bibfnamefont {N.~K.}\ \bibnamefont
  {Bernardes}},\ }\bibfield  {title} {\bibinfo {title} {Approximating
  invertible maps by recovery channels: Optimality and an application to
  non-{M}arkovian dynamics},\ }\href
  {https://doi.org/10.1103/PhysRevA.105.042421} {\bibfield  {journal} {\bibinfo
   {journal} {Phys. Rev. A}\ }\textbf {\bibinfo {volume} {105}},\ \bibinfo
  {pages} {042421} (\bibinfo {year} {2022})}\BibitemShut {NoStop}%
\bibitem [{\citenamefont {Gily\'en}\ \emph {et~al.}(2022)\citenamefont
  {Gily\'en}, \citenamefont {Lloyd}, \citenamefont {Marvian}, \citenamefont
  {Quek},\ and\ \citenamefont {Wilde}}]{art:Lloyd_PetzRecovery}%
  \BibitemOpen
  \bibfield  {author} {\bibinfo {author} {\bibfnamefont {A.}~\bibnamefont
  {Gily\'en}}, \bibinfo {author} {\bibfnamefont {S.}~\bibnamefont {Lloyd}},
  \bibinfo {author} {\bibfnamefont {I.}~\bibnamefont {Marvian}}, \bibinfo
  {author} {\bibfnamefont {Y.}~\bibnamefont {Quek}},\ and\ \bibinfo {author}
  {\bibfnamefont {M.~M.}\ \bibnamefont {Wilde}},\ }\bibfield  {title} {\bibinfo
  {title} {Quantum algorithm for {P}etz recovery channels and pretty good
  measurements},\ }\href {https://doi.org/10.1103/PhysRevLett.128.220502}
  {\bibfield  {journal} {\bibinfo  {journal} {Phys. Rev. Lett.}\ }\textbf
  {\bibinfo {volume} {128}},\ \bibinfo {pages} {220502} (\bibinfo {year}
  {2022})}\BibitemShut {NoStop}%
\bibitem [{\citenamefont {Flammia}\ and\ \citenamefont
  {Wallman}(2020)}]{art:Flammia}%
  \BibitemOpen
  \bibfield  {author} {\bibinfo {author} {\bibfnamefont {S.~T.}\ \bibnamefont
  {Flammia}}\ and\ \bibinfo {author} {\bibfnamefont {J.~J.}\ \bibnamefont
  {Wallman}},\ }\bibfield  {title} {\bibinfo {title} {Efficient estimation of
  {P}auli channels},\ }\href {https://doi.org/10.1145/3408039} {\bibfield
  {journal} {\bibinfo  {journal} {{ACM} Trans. Quantum Comput.}\ }\textbf
  {\bibinfo {volume} {1}},\ \bibinfo {pages} {1} (\bibinfo {year}
  {2020})}\BibitemShut {NoStop}%
\bibitem [{\citenamefont {Greenbaum}(2015)}]{art:Greenbaum}%
  \BibitemOpen
  \bibfield  {author} {\bibinfo {author} {\bibfnamefont {D.}~\bibnamefont
  {Greenbaum}},\ }\href@noop {} {\bibinfo {title} {Introduction to quantum gate
  set tomography}} (\bibinfo {year} {2015}),\ \Eprint
  {https://arxiv.org/abs/1509.02921} {arXiv:1509.02921 [quant-ph]} \BibitemShut
  {NoStop}%
\bibitem [{\citenamefont {Nielsen}\ \emph {et~al.}(2021)\citenamefont
  {Nielsen}, \citenamefont {Gamble}, \citenamefont {Rudinger}, \citenamefont
  {Scholten}, \citenamefont {Young},\ and\ \citenamefont
  {Blume-Kohout}}]{art:Nielsen_GateTomography}%
  \BibitemOpen
  \bibfield  {author} {\bibinfo {author} {\bibfnamefont {E.}~\bibnamefont
  {Nielsen}}, \bibinfo {author} {\bibfnamefont {J.~K.}\ \bibnamefont {Gamble}},
  \bibinfo {author} {\bibfnamefont {K.}~\bibnamefont {Rudinger}}, \bibinfo
  {author} {\bibfnamefont {T.}~\bibnamefont {Scholten}}, \bibinfo {author}
  {\bibfnamefont {K.}~\bibnamefont {Young}},\ and\ \bibinfo {author}
  {\bibfnamefont {R.}~\bibnamefont {Blume-Kohout}},\ }\bibfield  {title}
  {\bibinfo {title} {Gate set tomography},\ }\href
  {https://doi.org/10.22331/q-2021-10-05-557} {\bibfield  {journal} {\bibinfo
  {journal} {Quantum}\ }\textbf {\bibinfo {volume} {5}},\ \bibinfo {pages}
  {557} (\bibinfo {year} {2021})}\BibitemShut {NoStop}%
\bibitem [{\citenamefont {Wood}\ \emph {et~al.}(2015)\citenamefont {Wood},
  \citenamefont {Biamonte},\ and\ \citenamefont
  {Cory}}]{art:Wood_ChannelRepresentations}%
  \BibitemOpen
  \bibfield  {author} {\bibinfo {author} {\bibfnamefont {C.~J.}\ \bibnamefont
  {Wood}}, \bibinfo {author} {\bibfnamefont {J.~D.}\ \bibnamefont {Biamonte}},\
  and\ \bibinfo {author} {\bibfnamefont {D.~G.}\ \bibnamefont {Cory}},\
  }\bibfield  {title} {\bibinfo {title} {Tensor networks and graphical calculus
  for open quantum systems},\ }\href
  {https://doi.org/https://doi.org/10.26421/QIC15.9-10-3} {\bibfield  {journal}
  {\bibinfo  {journal} {Quantum Inf. Comput.}\ }\textbf {\bibinfo {volume}
  {15}},\ \bibinfo {pages} {759} (\bibinfo {year} {2015})}\BibitemShut
  {NoStop}%
\bibitem [{\citenamefont {B{\'{e}}ny}(2017)}]{art:Beny}%
  \BibitemOpen
  \bibfield  {author} {\bibinfo {author} {\bibfnamefont {C.}~\bibnamefont
  {B{\'{e}}ny}},\ }\bibfield  {title} {\bibinfo {title} {Quantum
  deconvolution},\ }\href {https://doi.org/10.1007/s11128-017-1796-3}
  {\bibfield  {journal} {\bibinfo  {journal} {Quantum Inf. Process.}\ }\textbf
  {\bibinfo {volume} {17}},\ \bibinfo {pages} {26} (\bibinfo {year}
  {2017})}\BibitemShut {NoStop}%
\bibitem [{\citenamefont {Audenaert}\ and\ \citenamefont
  {Scheel}(2008)}]{art:Audenaert_RandomUnitary}%
  \BibitemOpen
  \bibfield  {author} {\bibinfo {author} {\bibfnamefont {K.~M.~R.}\
  \bibnamefont {Audenaert}}\ and\ \bibinfo {author} {\bibfnamefont
  {S.}~\bibnamefont {Scheel}},\ }\bibfield  {title} {\bibinfo {title} {On
  random unitary channels},\ }\href
  {https://doi.org/10.1088/1367-2630/10/2/023011} {\bibfield  {journal}
  {\bibinfo  {journal} {New J. Phys.}\ }\textbf {\bibinfo {volume} {10}},\
  \bibinfo {pages} {023011} (\bibinfo {year} {2008})}\BibitemShut {NoStop}%
\bibitem [{\citenamefont
  {Siudzi{\'{n}}ska}(2020)}]{art:Siudzinska_GeneralizedPauliChannels}%
  \BibitemOpen
  \bibfield  {author} {\bibinfo {author} {\bibfnamefont {K.}~\bibnamefont
  {Siudzi{\'{n}}ska}},\ }\bibfield  {title} {\bibinfo {title} {Classical
  capacity of generalized {P}auli channels},\ }\href
  {https://doi.org/10.1088/1751-8121/abb276} {\bibfield  {journal} {\bibinfo
  {journal} {J. Phys. A Math. Theor.}\ }\textbf {\bibinfo {volume} {53}},\
  \bibinfo {pages} {445301} (\bibinfo {year} {2020})}\BibitemShut {NoStop}%
\bibitem [{\citenamefont {Cai}\ \emph {et~al.}(2020)\citenamefont {Cai},
  \citenamefont {Xu},\ and\ \citenamefont
  {Benjamin}}]{art:Cai_PauliConjugation}%
  \BibitemOpen
  \bibfield  {author} {\bibinfo {author} {\bibfnamefont {Z.}~\bibnamefont
  {Cai}}, \bibinfo {author} {\bibfnamefont {X.}~\bibnamefont {Xu}},\ and\
  \bibinfo {author} {\bibfnamefont {S.~C.}\ \bibnamefont {Benjamin}},\
  }\bibfield  {title} {\bibinfo {title} {Mitigating coherent noise using
  {P}auli conjugation},\ }\href {https://doi.org/10.1038/s41534-019-0233-0}
  {\bibfield  {journal} {\bibinfo  {journal} {NPJ Quantum. Inf.}\ }\textbf
  {\bibinfo {volume} {6}},\ \bibinfo {pages} {17} (\bibinfo {year}
  {2020})}\BibitemShut {NoStop}%
\bibitem [{\citenamefont {Roncallo}\ \emph {et~al.}(2022)\citenamefont
  {Roncallo}, \citenamefont {Maccone},\ and\ \citenamefont
  {Macchiavello}}]{art:QUIT_InPreparation}%
  \BibitemOpen
  \bibfield  {author} {\bibinfo {author} {\bibfnamefont {S.}~\bibnamefont
  {Roncallo}}, \bibinfo {author} {\bibfnamefont {L.}~\bibnamefont {Maccone}},\
  and\ \bibinfo {author} {\bibfnamefont {C.}~\bibnamefont {Macchiavello}},\
  }\href@noop {} {\bibinfo {title} {Pauli transfer matrix direct
  reconstruction}} (\bibinfo {year} {2022}),\ \Eprint
  {https://arxiv.org/abs/2212.11968} {arXiv:2212.11968 [quant-ph]} \BibitemShut
  {NoStop}%
\bibitem [{\citenamefont {D'Arrigo}\ \emph {et~al.}(2013)\citenamefont
  {D'Arrigo}, \citenamefont {Benenti}, \citenamefont {Falci},\ and\
  \citenamefont {Macchiavello}}]{art:QUIT_FullyCorrelatedDamping}%
  \BibitemOpen
  \bibfield  {author} {\bibinfo {author} {\bibfnamefont {A.}~\bibnamefont
  {D'Arrigo}}, \bibinfo {author} {\bibfnamefont {G.}~\bibnamefont {Benenti}},
  \bibinfo {author} {\bibfnamefont {G.}~\bibnamefont {Falci}},\ and\ \bibinfo
  {author} {\bibfnamefont {C.}~\bibnamefont {Macchiavello}},\ }\bibfield
  {title} {\bibinfo {title} {Classical and quantum capacities of a fully
  correlated amplitude damping channel},\ }\href
  {https://doi.org/10.1103/PhysRevA.88.042337} {\bibfield  {journal} {\bibinfo
  {journal} {Phys. Rev. A}\ }\textbf {\bibinfo {volume} {88}},\ \bibinfo
  {pages} {042337} (\bibinfo {year} {2013})}\BibitemShut {NoStop}%
\bibitem [{\citenamefont {D'Arrigo}\ \emph {et~al.}(2015)\citenamefont
  {D'Arrigo}, \citenamefont {Benenti}, \citenamefont {Falci},\ and\
  \citenamefont {Macchiavello}}]{art:QUIT_CorrelatedDamping}%
  \BibitemOpen
  \bibfield  {author} {\bibinfo {author} {\bibfnamefont {A.}~\bibnamefont
  {D'Arrigo}}, \bibinfo {author} {\bibfnamefont {G.}~\bibnamefont {Benenti}},
  \bibinfo {author} {\bibfnamefont {G.}~\bibnamefont {Falci}},\ and\ \bibinfo
  {author} {\bibfnamefont {C.}~\bibnamefont {Macchiavello}},\ }\bibfield
  {title} {\bibinfo {title} {Information transmission over an amplitude damping
  channel with an arbitrary degree of memory},\ }\href
  {https://doi.org/10.1103/PhysRevA.92.062342} {\bibfield  {journal} {\bibinfo
  {journal} {Phys. Rev. A}\ }\textbf {\bibinfo {volume} {92}},\ \bibinfo
  {pages} {062342} (\bibinfo {year} {2015})}\BibitemShut {NoStop}%
\bibitem [{\citenamefont {Hamada}(2002)}]{art:Hamada}%
  \BibitemOpen
  \bibfield  {author} {\bibinfo {author} {\bibfnamefont {M.}~\bibnamefont
  {Hamada}},\ }\bibfield  {title} {\bibinfo {title} {A lower bound on the
  quantum capacity of channels with correlated errors},\ }\href
  {https://doi.org/10.1063/1.1495537} {\bibfield  {journal} {\bibinfo
  {journal} {J. Math. Phys.}\ }\textbf {\bibinfo {volume} {43}},\ \bibinfo
  {pages} {4382} (\bibinfo {year} {2002})}\BibitemShut {NoStop}%
\bibitem [{\citenamefont {Byrd}\ and\ \citenamefont
  {Khaneja}(2003)}]{art:Byrd}%
  \BibitemOpen
  \bibfield  {author} {\bibinfo {author} {\bibfnamefont {M.~S.}\ \bibnamefont
  {Byrd}}\ and\ \bibinfo {author} {\bibfnamefont {N.}~\bibnamefont {Khaneja}},\
  }\bibfield  {title} {\bibinfo {title} {Characterization of the positivity of
  the density matrix in terms of the coherence vector representation},\ }\href
  {https://doi.org/10.1103/PhysRevA.68.062322} {\bibfield  {journal} {\bibinfo
  {journal} {Phys. Rev. A}\ }\textbf {\bibinfo {volume} {68}},\ \bibinfo
  {pages} {062322} (\bibinfo {year} {2003})}\BibitemShut {NoStop}%
\end{thebibliography}%
\end{document}